\newtheorem{theorem}{Theorem}
\newtheorem{lemma}{Lemma}
\theoremstyle{definition}
\newtheorem{definition}{Definition}
\theoremstyle{remark}
\newtheorem{remark}{Remark}
\theoremstyle{definition}
\newtheorem{assumption}{Assumption}
\theoremstyle{definition}
\newtheorem{example}{Example}
\newcommand{\R}{\mathbb{R}}
\definecolor{darkblue}{RGB}{0,0,102}
\definecolor{lightblue}{RGB}{77,77,148}
\definecolor{gold}{RGB}{234, 170, 0}
\definecolor{metallic_gold}{RGB}{139, 111, 78}
\renewcommand{\cal}[1]{\mathcal{ #1 }}
\newcommand{\mb}[1]{\mathbf{ #1 }}
\newcommand{\bs}[1]{\boldsymbol{ #1 }}
\newcommand{\grad}{\nabla}
\DeclareMathOperator*{\argmin}{argmin}
\newcommand{\certfn}{\mathtt{C}}
\newcommand{\fhat}{\widehat{\mb{f}}}
\newcommand{\ghat}{\widehat{\mb{g}}}
\newcommand{\certhat}{\widehat{\dot \certfn}}
\newcommand{\errf}{\widetilde{\mb{f}}}
\newcommand{\errg}{\widetilde{\mb{g}}}
\newcommand{\errF}{\widetilde{\mb{F}}}
\newcommand{\Lips}{\mathcal{L}}
\title{\LARGE \textbf{Towards Robust Data-Driven Control Synthesis for \\ Nonlinear Systems with Actuation Uncertainty}}
\author{Andrew J. Taylor$^{1}$, Victor D. Dorobantu$^{1}$, Sarah Dean$^{1}$, Benjamin Recht, Yisong Yue, and Aaron D. Ames
\thanks{$^1$Authors contributed equally. A.J. Taylor, V.D. Dorobantu, Y. Yue, and A.D. Ames are with the Department of Computing and Mathematical Sciences, California Institute of Technology, Pasadena, CA 91125, USA, {\tt\small \{ajtaylor, vdoroban, yyue, ames\}@caltech.edu}. S. Dean and B. Recht are with the Department of Electrical Engineering and Computer Sciences, University of California at Berkeley, Berkeley, CA 94720, USA, {\tt \small \{sarah,brecht\}@berkeley.edu}.
}
}
\begin{document}

\maketitle

\begin{abstract}%
Modern nonlinear control theory seeks to endow systems with properties such as stability and safety, and has been deployed successfully across various domains. Despite this success, model uncertainty remains a significant challenge in ensuring that model-based controllers transfer to real world systems. This paper develops a data-driven approach to robust control synthesis in the presence of model uncertainty using Control Certificate Functions (CCFs), resulting in a convex optimization based controller for achieving properties like stability and safety. An important benefit of our framework is nuanced data-dependent guarantees, which in principle can yield sample-efficient data collection approaches that need not fully determine the input-to-state relationship. This work serves as a starting point for addressing important questions at the intersection of nonlinear control theory and non-parametric learning, both theoretical and in application. We demonstrate the efficiency of the proposed method with respect to input data in simulation with an inverted pendulum in multiple experimental settings.
\end{abstract}

\section{Introduction}
Ensuring properties such as stability and safety is of significant importance in many modern control applications, including autonomous driving, industrial robotics, and aerospace vehicles. In practice, the models used to design these controllers are imperfect, with model uncertainty arising due to unmodeled dynamics and parametric errors. In the presence of such uncertainty, controllers may fail to render systems stable or safe. As real world control systems become increasingly complex, the potential for detrimental modeling errors increases, and thus it is critical to study control synthesis in the presence of uncertainty.

In this work, we propose a control synthesis process using \textit{control certificate functions} (CCFs) \cite{dimitrova2014deductive, boffi2020learning} that incorporates a data-driven approach for capturing model uncertainty. CCFs generalize popular tools from nonlinear control for achieving stability and safety such as Control Lyapunov Functions (CLFs) \cite{artstein1983stabilization}, Control Barrier Functions (CBFs) \cite{ames2014control}, and Control Barrier-Lyapunov Functions \cite{prajna2004safety}. CLFs and CBFs have been successfully deployed in the context of bipedal robotics \cite{ames2014rapidly, nguyen20163d}, adaptive cruise control \cite{ames2014control}, robotic manipulators \cite{khansari2014learning}, and multi-agent systems \cite{pickem2017robotarium}. Data-driven and machine learning based approaches have shown great promise for controlling systems with an uncertain model or with no model at all \cite{kober2013reinforcement,shi2019neural, cheng2019end,lee2020learning}. The integration of techniques from nonlinear control theory for achieving stability and safety with data-driven methods has become increasingly popular \cite{aswani2013provably, beckers2019stable, berkenkamp2015safe, gillula2012guaranteed, qu2020combining}, with many approaches relying on certificate functions for theoretical guarantees \cite{khansari2014learning, lederer2020parameter, choi2020reinforcement, cohen2020approximate, castaneda2020gaussian}.

Uncertainty in the effect of actuation remains a major challenge in achieving control-theoretic guarantees with data-driven methods. Many existing approaches assume certainty in how actuation enters the dynamics \cite{fisac2018general, umlauft2018uncertainty, zheng2020learning}, use structured controllers requiring strong characterizations of this uncertainty \cite{beckers2019stable, lederer2020training}, or require high coverage datasets with (nearly) complete characterizations of the input-to-state relationship \cite{berkenkamp2016safe2,berkenkamp2017safe}. In practice, collecting such data can be prohibitively costly or damage the system, suggesting a need for data-driven approaches that accommodate actuation uncertainty without requiring this complete characterization.

\textbf{Our contribution:} The contribution of this work is a novel approach for robust data-driven control synthesis via CCFs for control-affine systems with model uncertainty, including actuation uncertainty, which is broadly applicable in many real-world settings such as robotic \cite{murray1994mathematical} and automotive systems \cite{ioannou1993autonomous}. In Section \ref{sec:datadriven} we incorporate data into a convex optimization based control synthesis problem as affine inequality constraints which restrict possible model uncertainties. This enables the choice of robust control inputs over convex uncertainty sets. Rather than requiring a full characterization of how input enters the system, this approach utilizes the affine structure of CCF dynamics to choose inputs. This reduces the impact of actuation uncertainty on the evolution of the certificate function and allows for guarantees of stability and safety. In summary, our results show that good performance can be achieved when training data provides sufficiently dense coverage of the state space and targeted excitation in input directions. The proposed approach provides a unique perspective for unifying nonlinear control and non-parametric machine learning that is well positioned to study both theoretical and application oriented questions at this intersection.



\section{Background}
\label{sec:background}
This section provides a review of certificate-based nonlinear control synthesis and an overview of how model uncertainty impacts these synthesis methods. 

\subsection{Control Certificate Functions}
Consider a nonlinear control affine system given by:
\begin{equation}
\label{eqn:dynamics}
    \dot{\mb{x}} = \mb{f}(\mb{x}) + \mb{g}(\mb{x})\mb{u},
\end{equation}
where $\mb{x}\in\R^n$, $\mb{u}\in\R^m$, and $\mb{f}:\R^n\to\R^n$ and $\mb{g}:\R^n\to \R^{n\times m}$ are locally Lipschitz continuous on $\R^n$. Further assume that the origin is an equilibrium point of the uncontrolled system ($\mb{f}(\mb{0})=\mb{0}$). In this work we assume that $\mb{u}$ may be chosen unbounded as in \cite{khansari2014learning,jankovic2018robust}. Given a locally Lipschitz continuous state-feedback controller $\mb{k}:\R^n\to\R^m$, the closed-loop system dynamics are:
\begin{equation}
\label{eqn:cloop}
    \dot{\mb{x}} = \mb{f}_{\textrm{cl}}(\mb{x})\triangleq \mb{f}(\mb{x})+\mb{g}(\mb{x})\mb{k}(\mb{x}).
\end{equation}
The assumption on local Lipschitz continuity of $\mb{f}$, $\mb{g}$, and $\mb{k}$ implies that $\mb{f}_\textrm{cl}$ is locally Lipschitz continuous. Thus for any initial condition $\mb{x}_0 \triangleq \mb{x}(0) \in \R^n$ there exists a maximum time interval $I(\mb{x}_0) = [0, t_{\textrm{max}})$ such that $\mb{x}(t)$ is the unique solution to \eqref{eqn:cloop} on $I(\mb{x}_0)$ \cite{perko2013differential}. 

The qualitative behavior (such as stability or safety) of the the closed-loop system \eqref{eqn:cloop} can be certified via the notion of a continuously differentiable \textit{certificate function} $\certfn:\R^n\to\R$. Given a \textit{comparison function} $\alpha:\R\to\R$ (specific to the qualitative behavior of interest), certification 
is specified as an inequality on the derivative of the certificate function along solutions to the closed-loop system:
\begin{equation}
\label{eqn:certificate}
   \dot{\certfn}(\mb{x}) = \grad \certfn(\mb{x})^\top \mb{f}_{\textrm{cl}}(\mb{x}) \leq -\alpha(\certfn(\mb{x}))\:.
\end{equation}
Synthesis of controllers that satisfy \eqref{eqn:certificate} by design motivates the following definition of a \textit{Control Certificate Function}:

\begin{definition}[Control Certificate Function (CCF)]
A continuously differentiable function $\certfn:\R^n\to\R$ is a \textit{Control Certificate Function (CCF)} for \eqref{eqn:dynamics} with comparison function $\alpha:\R\to\R$ if for all $\mb{x}\in\R^n$: 
\begin{equation}
\label{eqn:ccf}
     \inf_{\mb{u}\in\R^m}  \grad \underbrace{\certfn(\mb{x})^\top \mb{f}(\mb{x})}_{L_{\mb{f}}\certfn(\mb{x})}+ \underbrace{\grad \certfn(\mb{x})^\top\mb{g}(\mb{x})}_{L_{\mb{g}}\certfn(\mb{x})}\mb{u}  \leq -\alpha(\certfn(\mb{x})).
\end{equation}
\end{definition}
\noindent where $L_{\mb{f}}\certfn:\R^n\to\R$ and $L_{\mb{g}}\certfn:\R^n\to\R^m$. The control-affine nature of the system dynamics are preserved by the CCF, such that the only component of the input that impacts the evolution of the certificate function lies in the direction of $L_{\mb{g}}\certfn(\mb{x})$.  Given a CCF $\certfn$ for \eqref{eqn:dynamics} and a corresponding comparison function $\alpha$, we define the point-wise set of all control values that satisfy the inequality in \eqref{eqn:ccf}:
\begin{equation}
\label{eqn:K_ccf}
    K_{\textrm{ccf}}(\mb{x}) \triangleq \left\{\mb{u}\in\R^m ~\left|~ L_\mb{f}\certfn(\mb{x})+L_\mb{g}\certfn(\mb{x})\mb{u}  \leq -\alpha(\certfn(\mb{x})) \right.\right\}.
\end{equation}
Any nominal locally Lipschitz continuous controller $\mb{k}_d:\R^n\to\R^m$ can be modified to take values in the set $K_\textrm{ccf}(\mb{x})$ via the certificate-critical CCF-QP:
\begin{align}
\label{eqn:CCF-QP}
\tag{CCF-QP}
\mb{k}(\mb{x}) =  \,\,\underset{\mb{u} \in \R^m}{\argmin}  &  \quad \frac{1}{2} \| \mb{u} -\mb{k}_d(\mb{x})\|_2^2  \\
\mathrm{s.t.} \quad & \quad\grad \certfn(\mb{x})^\top\left(\mb{f}(\mb{x})+\mb{g}(\mb{x})\mb{u}\right)  \leq -\alpha(\certfn(\mb{x})). \nonumber
\end{align}

Before providing particular examples of certificate functions useful in control synthesis, we review the following definitions.
We denote a continuous function $\alpha:[0,a)\to\R_+$, with $a>0$, as \textit{class $\cal{K}$} ($\alpha\in\cal{K}$) if $\alpha(0)=0$ and $\alpha$ is strictly monotonically increasing. If $a=\infty$ and $\lim_{r\to\infty}\alpha(r)=\infty$, then $\alpha$ is \textit{class $\cal{K}_\infty$} ($\alpha\in\cal{K}_\infty$). A continuous function $\alpha:(-b,a)\to\R$, with $a,b>0$, is \textit{extended class $\cal{K}$} ($\alpha\in\cal{K}_e$) if $\alpha(0)=0$ and $\alpha$ is strictly monotonically increasing. If $a,b=\infty$, $\lim_{r\to\infty}\alpha(r)=\infty$, and $\lim_{r\to-\infty}\alpha(r)=-\infty$, then $\alpha$ is \textit{extended class $\cal{K}_\infty$} ($\alpha\in\cal{K}_{\infty,e}$). Finally, we note that $c\in\R$ is referred to as a \textit{regular value} of a continuously differentiable function $h:\R^n\to\R$ if $h(\mb{x})=c\implies \grad h(\mb{x}) \neq\mb{0}_n$.

\begin{example}[Stability via Control Lyapunov Functions]
   In the context of stabilization to the origin, a control certificate function $V:\R^n\to\R$ with a class $\cal{K}$ comparison function $\alpha\in\cal{K}$ that satisfies:
   \begin{equation}
       \alpha_1(\Vert\mb{x}\Vert)\leq V(\mb{x}) \leq \alpha_2(\Vert\mb{x}\Vert),
   \end{equation}
   for $\alpha_1,\alpha_2\in\cal{K}$, is a \textit{Control Lyapunov Function (CLF)} \cite{artstein1983stabilization, sontag1989universal}, with stabilization to the origin achieved by controllers taking values in the point-wise set $K_{\textrm{ccf}}$ given by \eqref{eqn:K_ccf} \cite{ames2017control}.  
\end{example}

\begin{example}[Safety via Control Barrier Functions]
 In the context of safety, defined as forward invariance \cite{blanchini1999set} of a set $~\mathcal{S}$, a control certificate function $h:\R^n\to\R$ with $0$ a regular value and a comparison function $\alpha\in\mathcal K_{\infty,e}$ that satisfies:
   \begin{equation}
       \mb{x}\in\mathcal{S} \implies h(\mb{x}) \leq 0,
   \end{equation}
   is a \textit{Control Barrier Function (CBF)} \cite{ames2014control,ames2017control}, with safety of the set $\mathcal{S}$ achieved by controllers taking values in the point-wise set $K_{\textrm{ccf}}$ given by \eqref{eqn:K_ccf} \cite{ames2019control}. We adopt the opposite sign convention for $h$ so satisfying \eqref{eqn:ccf} guarantees safety.
\end{example}



\subsection{Model Uncertainty}

In practice, uncertainty in the system dynamics \eqref{eqn:dynamics} exists due to parametric error and unmodeled dynamics, such that the functions $\mb{f}$ and $\mb{g}$ are not precisely known. Control affine systems are a natural setting to study actuation uncertainty as the function $\mb{g}$ can be seen as an uncertain gain multiplying the input. In this context, control synthesis is done with a nominal model that estimates the true system dynamics:
\begin{equation}
\label{eqn:nom_dynamics}
    \widehat{\dot{\mb{x}}} = \widehat{\mb{f}}(\mb{x}) + \widehat{\mb{g}}(\mb{x})\mb{u},
\end{equation}
where $\widehat{\mb{f}}:\R^n\to\R^n$ and $\widehat{\mb{g}}:\R^n\to\R^{n\times m}$ are locally Lipschitz continuous. Adding and subtracting this expression to and from \eqref{eqn:dynamics} implies the system evolution is described by:
\begin{equation}
\label{eqn:res_dynamics}
    \dot{\mb{x}} = \widehat{\mb{f}}(\mb{x}) + \widehat{\mb{g}}(\mb{x})\mb{u} + \underbrace{\mb{f}(\mb{x})-\widehat{\mb{f}}(\mb{x})}_{\widetilde{\mb{f}}(\mb{x})} + \underbrace{\left(\mb{g}(\mb{x})-\widehat{\mb{g}}(\mb{x})\right)}_{\widetilde{\mb{g}}(\mb{x})}\mb{u},
\end{equation}
where $\widetilde{\mb{f}}:\R^n\to\R^n$ and $\widetilde{\mb{g}}:\R^n\to\R^{n\times m}$ are the unmodeled dynamics. This uncertainty in the dynamics additionally manifests in the time derivative of a CCF for the system:
\begin{align}
\label{eqn:res_cerfn}
    \dot{\certfn}(\mb{x},\mb{u}) = & \overbrace{\underbrace{\grad \certfn(\mb{x})^\top\widehat{\mb{f}}(\mb{x})}_{L_{\widehat{\mb{f}}}\certfn(\mb{x})}+\underbrace{\grad \certfn(\mb{x})^\top\widehat{\mb{g}}(\mb{x})}_{L_{\widehat{\mb{g}}}\certfn(\mb{x})}\mb{u}}^{\widehat{\dot{\certfn}}(\mb{x},\mb{u})} \nonumber \\ & + \underbrace{\grad \certfn(\mb{x})^\top\widetilde{\mb{f}}(\mb{x})}_{L_{\errf}\certfn(\mb{x})} + \underbrace{\grad \certfn(\mb{x})^\top\widetilde{\mb{g}}(\mb{x})}_{L_{\errg}\certfn(\mb{x})}\mb{u},
\end{align}
where $L_{\widehat{\mb{f}}}\certfn, L_{\errf}\certfn:\R^n\to\R$, and $L_{\widehat{\mb{g}}}\certfn, L_{\errg}\certfn:\R^n\to\R^m$. The presence of uncertainty in the CCF time derivative makes it impossible to verify whether a given control input is in the set $K_{\textrm{ccf}}(\mb{x})$ given in \eqref{eqn:K_ccf}, and can lead to failure to achieve the desired qualitative behavior.

\begin{assumption}
\label{ass:wellposed}
The function $\certfn:\R^n\to\R$ is a valid CCF with comparison function $\alpha:\R\to\R$ for the true dynamic system \eqref{eqn:res_dynamics}.
Mathematically this assumption appears as:
\begin{equation*}
   \inf_{\mb{u}\in\R^m}\grad \certfn(\mb{x})^\top\left(\mb{f}(\mb{x})+\mb{g}(\mb{x})\mb{u}\right)\leq-\alpha(\certfn(\mb{x})).
\end{equation*}
\end{assumption}
This assumption is structural in nature and can be met for feedback linearizable systems (such as robotic systems).

\begin{remark}
Though many approaches to CCF design (linearization, energy-based, numerical sums-of-squares methods) rely on knowledge of the true dynamics, 
it is also possible to design CCFs without explicit knowledge of the true system dynamics.
For example, a valid CLF and comparison function for the true system can be designed via feedback linearization assuming only knowledge of the degree of actuation (see \cite{taylor2019episodic} for full details). 
This method also works to specify CBFs which are defined by sublevel sets of CLFs (as in our simulation results).
We emphasize the difference between choosing a qualitative behavior that the system can be made to satisfy (e.g. the CCF) and actually designing the control inputs which achieve the behavior. Our work focuses on the latter: solving the problem of choosing stable/safe inputs in the presence of uncertainty.
\end{remark}

\begin{assumption}
The functions $\widetilde{\mb{f}}$ and $\widetilde{\mb{g}}$ are globally Lipschitz continuous with known Lipschitz constants $\mathcal{L}_{\widetilde{\mb{f}}}$ and $\mathcal{L}_{\widetilde{\mb{g}}}$.
\end{assumption}

\begin{remark}
Knowledge of minimal Lipschitz constants is not necessary, but smaller Lipschitz constants are associated with improved performance of data-driven robust control methods.
\end{remark}

\section{Data-Driven Robust Control Synthesis}
\label{sec:datadriven}
In this section we explore how data can be incorporated directly into an optimization based controller to robustly achieve a desired qualitative behavior specified via a CCF.

Consider a dataset consisting of $N$ tuples of states, inputs, and corresponding state time derivatives, $\mathfrak{D} = \{(\mb{x}_i,\mb{u}_i,\dot{\mb{x}}_i)\}_{i=1}^N$,
with $\mb{x}_i\in\R^n$, $\mb{u}_i\in\R^m$, and $\dot{\mb{x}}_i\in\R^n$ for $i=1,\ldots, N$. It may not be possible to directly measure the state time derivatives $\dot{\mb{x}}_i$, but they can approximated from sequential state observations $\mb{x}_i$ \cite{taylor2019episodic,taylor2020learning}. While we do not consider noise, our construction can be modified to account for bounded noise at the expense of a constant offset to~\eqref{eqn:errbnd}.

We now show how data allows us to reduce uncertainty by constraining the possible values of the functions $\widetilde{\mb{f}}$ and $\widetilde{\mb{g}}$ directly, without a parametric estimator.
Considering the uncertain model \eqref{eqn:res_dynamics} evaluated at a state and input pair $(\mb{x}_i,\mb{u}_i)$ in the dataset yields:
\begin{equation}
\label{eqn:Ferr}
    \widetilde{\mb{F}}_i \triangleq   \dot{\mb{x}}_i - (\fhat(\mb{x}_i) + \ghat(\mb{x}_i)\mb{u}_i) = \errf(\mb{x}_i) + \errg(\mb{x}_i)\mb{u}_i,
\end{equation}
where $\widetilde{\mb{F}}_i\in\R^n$ can be interpreted as the error between the true state time derivative and the nominal model \eqref{eqn:nom_dynamics} evaluated at the state and input pair $(\mb{x}_i,\mb{u}_i)$.

Considering a state $\mb{x}\in\R^n$ (not necessarily present in the dataset $\mathfrak{D}$), the second equality in \eqref{eqn:Ferr} implies:
\begin{equation}
    \errf(\mb{x})+\errg(\mb{x})\mb{u}_i-\errF_i = \errf(\mb{x}) - \errf(\mb{x}_i) + \left(\errg(\mb{x}) - \errg(\mb{x}_i) \right)\mb{u}_i.
\end{equation}
This expression provides a relationship between the possible values of the unmodeled dynamics $\errf$ and $\errg$ at the state $\mb{x}$ and the values of the unmodeled dynamics at the data point $\mb{x}_i$. Using the local Lipschitz continuity of the unmodeled dynamics yields the following bound:
\begin{align}
\label{eqn:errbnd}
    \Big\Vert \errf(\mb{x}) + \errg(\mb{x}) & \mb{u}_i - \errF_i \Big\Vert_2 \nonumber \\ 
    &= \left\Vert \errf(\mb{x}) - \errf(\mb{x}_i) + \left( \errg(\mb{x}) - \errg(\mb{x}_i) \right)\mb{u}_i \right\Vert_2, \nonumber\\
    &\leq \left( \Lips_{\errf} + \Lips_{\errg} \Vert\mb{u}_i\Vert_2 \right) \Vert \mb{x} - \mb{x}_i \Vert_2 \triangleq \epsilon_i(\mb{x}).
\end{align}
where $\epsilon_i:\R^n\to\R_+$ for $i=1,\ldots,N$. We see that the bound grows with the magnitude of the Lipschitz constants $\Lips_{\errf}$ and $\Lips_{\errg}$ and distance of the state $\mb{x}$ from the data point $\mb{x}_i$. The values of $\Lips_{\errf}$ and $\Lips_{\errg}$ are not explicitly data dependent, and thus the bound can be improved for a given dataset by reducing the possible model uncertainty through improved modeling. Given this construction we may define the point-wise uncertainty set:
\begin{align}
\label{eqn:uncertainelementset}
    \mathcal{U}_i(\mb{x}) \triangleq &\bigg \{ (\mb{A},\mb{b})\in\R^{n\times m}\times\R^{n} ~\bigg|~ \nonumber\\ &\left\Vert \mb{b} +\mb{A}\mb{u}_i - \errF_i\right\Vert_2\leq  \epsilon_i(\mb{x})\bigg\}\subset\R^{n\times m}\times\R^{n},
\end{align}
noting that $(\errg(\mb{x}),\errf(\mb{x}))\in\mathcal{U}_i(\mb{x})$ and $\mathcal{U}_i(\mb{x})$ is closed and convex. Considering this construction over the entire dataset $\mathfrak{D}$ yields the following point-wise uncertainty set:
\begin{equation}
\label{eqn:uncertainset}
    \mathcal{U}(\mb{x}) \triangleq \bigcap_{i=1}^N \mathcal{U}_i(\mb{x})\subset\R^{n\times m}\times\R^{n},
\end{equation}
noting that $(\errg(\mb{x}),\errf(\mb{x}))\in\mathcal{U}(\mb{x})$ and $\mathcal{U}(\mb{x})$ is closed and convex. Therefore, $\mathcal{U}(\mb{x})$ consists of all possible model errors that are consistent with the observed data. This allows us to pose the following data robust control problem:
\begin{definition}[Data Robust Control Certificate Function Optimization Problem]
\begin{align}
\label{eqn:robust_general}
\tag{DR-CCF-OP}
    \mb{k}_\mathrm{rob}(\mb{x}) = &~\argmin_{\mb{u} \in \R^m}  \frac{1}{2}\Vert \mb{u} -\mb{k}_d(\mb{x})\Vert_2^2 \nonumber \\
    &~ \mathrm{s.t.}~ \certhat(\mb{x},\mb{u}) + \grad \certfn(\mb{x})^\top\left(\mb{b} + \mb{A}\mb{u}\right) \leq-\alpha(\certfn(\mb{x})) \nonumber \\ &\qquad\textrm{for all}~ (\mb{A},\mb{b}) \in \mathcal{U}(\mb{x}). \nonumber
\end{align}
\end{definition}
By construction we have that $(\errg(\mb{x}),\errf(\mb{x}))\in\mathcal{U}(\mb{x})$, implying that $\mb{k}_{\textrm{rob}}(\mb{x})\in K_{\textrm{ccf}}(\mb{x})$ when the problem is feasible. Thus the closed-loop system \eqref{eqn:cloop} under $\mb{k}_{\textrm{rob}}$ satisfies inequality \eqref{eqn:certificate}. We next present one of our main results, using robust optimization \cite{ben2009robust} to yield a convex problem for synthesizing such a robust controller.

\begin{theorem}[Robust Control Synthesis]
\label{thm:robustcontrolsynthesis}
Let $\certfn:\R^n\to\R$ be a control certificate with comparison function $\alpha:\R\to\R$. The robust controller \eqref{eqn:robust_general} is equivalently expressed as:
\begin{align}
\label{eqn:robust_final}
\tag{DR-CCF-SOCP}
    \mb{k}_{\mathrm{rob}}(\mb{x}) & =  \argmin_{\substack{\mb{u} \in \R^m  \\ \bs{\lambda}_i \in \R^n}} ~ \frac{1}{2}\Vert \mb{u}-\mb{k}_d(\mb{x})\Vert_2^2 \nonumber \\
    \mathrm{s.t.} &~ \certhat(\mb{x},\mb{u}) -\sum_{i=1}^N\left(\bs{\lambda}_i^\top \errF_i - \Vert\bs{\lambda}_i \Vert_2 \epsilon_i(\mb{x})\right) \leq -\alpha(\certfn(\mb{x})), \nonumber \\
    &~ \sum_{i=1}^N \bs{\lambda}_i \mb{u}_i^\top = -\grad \certfn(\mb{x})\mb{u}^\top, \quad \sum_{i=1}^N \bs{\lambda}_i= -\grad \certfn(\mb{x}).\nonumber
\end{align}
\end{theorem}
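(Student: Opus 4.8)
The plan is to read \eqref{eqn:robust_general} as a robust linear optimization problem in which the pair $(\mb{A},\mb{b})$ is an uncertain parameter constrained to the intersection of Euclidean balls $\mathcal{U}(\mb{x}) = \bigcap_{i=1}^N \mathcal{U}_i(\mb{x})$, and to compute its robust counterpart by conic (Lagrangian) duality along the lines of \cite{ben2009robust}. First I would observe that, since $\certhat(\mb{x},\mb{u})$ and $\alpha(\certfn(\mb{x}))$ do not depend on $(\mb{A},\mb{b})$, the semi-infinite constraint of \eqref{eqn:robust_general} holds if and only if $\certhat(\mb{x},\mb{u}) + P(\mb{x},\mb{u}) \leq -\alpha(\certfn(\mb{x}))$, where $P(\mb{x},\mb{u}) \triangleq \sup_{(\mb{A},\mb{b}) \in \mathcal{U}(\mb{x})} \grad\certfn(\mb{x})^\top(\mb{b} + \mb{A}\mb{u})$. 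Writing $\ip{\cdot}{\cdot}$ for the Frobenius inner product, the objective equals $\ip{\grad\certfn(\mb{x})\mb{u}^\top}{\mb{A}} + \grad\certfn(\mb{x})^\top\mb{b}$, which is linear in $(\mb{A},\mb{b})$, and $\mathcal{U}(\mb{x})$ is cut out by the norm-ball constraints $\|\mb{A}\mb{u}_i + \mb{b} - \errF_i\|_2 \leq \epsilon_i(\mb{x})$; hence $P(\mb{x},\mb{u})$ is the optimal value of a linear-objective second-order cone program.

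Next I would dualize $P(\mb{x},\mb{u})$. Introducing slack variables $\mb{s}_i \in \R^n$ with $\mb{s}_i = \mb{A}\mb{u}_i + \mb{b} - \errF_i$ and $\|\mb{s}_i\|_2 \leq \epsilon_i(\mb{x})$, and attaching multipliers $\bs{\lambda}_i \in \R^n$ to these equalities, the Lagrangian is affine in $\mb{A}$ and in $\mb{b}$. Maximizing over $\mb{A}$ forces its (matrix) coefficient to vanish, giving $\sum_{i=1}^N \bs{\lambda}_i\mb{u}_i^\top = -\grad\certfn(\mb{x})\mb{u}^\top$; maximizing over $\mb{b}$ gives $\sum_{i=1}^N \bs{\lambda}_i = -\grad\certfn(\mb{x})$; and maximizing the remaining term $-\sum_i \bs{\lambda}_i^\top\mb{s}_i$ over $\|\mb{s}_i\|_2 \leq \epsilon_i(\mb{x})$ contributes $\sum_i \|\bs{\lambda}_i\|_2\,\epsilon_i(\mb{x})$ by Cauchy--Schwarz. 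The dual is therefore the minimization of $-\sum_{i=1}^N\big(\bs{\lambda}_i^\top\errF_i - \|\bs{\lambda}_i\|_2\,\epsilon_i(\mb{x})\big)$ over $\{\bs{\lambda}_i\}$ subject to the two linear equality constraints above -- exactly the data appearing in \eqref{eqn:robust_final}.

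The step needing care is strong duality between $P(\mb{x},\mb{u})$ and this dual. The set $\mathcal{U}(\mb{x})$ is in general unbounded (its recession cone is $\{(\mb{A},\mb{b}) : \mb{A}\mb{u}_i + \mb{b} = \mb{0}\text{ for all }i\}$), so $P(\mb{x},\mb{u})$ may be $+\infty$; this happens precisely when the two equality constraints admit no feasible $\{\bs{\lambda}_i\}$, in which case both \eqref{eqn:robust_general} and \eqref{eqn:robust_final} are infeasible at that $\mb{u}$ and the claim holds vacuously. When $P(\mb{x},\mb{u})$ is finite I would invoke the standard duality theorem for robust linear inequalities with conic/ellipsoidal uncertainty \cite{ben2009robust} -- or verify a Slater-type condition directly using $(\errg(\mb{x}),\errf(\mb{x})) \in \mathcal{U}(\mb{x})$ from \eqref{eqn:errbnd} -- to conclude that there is no duality gap and the dual optimum is attained, so $P(\mb{x},\mb{u})$ equals the minimum of the dual objective over feasible $\{\bs{\lambda}_i\}$.

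Finally, substituting this dual expression into $\certhat(\mb{x},\mb{u}) + P(\mb{x},\mb{u}) \leq -\alpha(\certfn(\mb{x}))$ turns the constraint into the requirement that there exist $\{\bs{\lambda}_i\}$ satisfying the two equality constraints together with $\certhat(\mb{x},\mb{u}) - \sum_i\big(\bs{\lambda}_i^\top\errF_i - \|\bs{\lambda}_i\|_2\,\epsilon_i(\mb{x})\big) \leq -\alpha(\certfn(\mb{x}))$. Since the objective $\tfrac12\|\mb{u}-\mb{k}_d(\mb{x})\|_2^2$ does not involve $\{\bs{\lambda}_i\}$, promoting them to decision variables shows that the feasible $\mb{u}$-sets of \eqref{eqn:robust_general} and \eqref{eqn:robust_final} coincide, and hence so do their minimizers $\mb{k}_{\mathrm{rob}}(\mb{x})$. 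I expect the main obstacle to be exactly this strong-duality/constraint-qualification argument: ruling out a duality gap in all cases, including when $\mathcal{U}(\mb{x})$ is unbounded or when the natural candidate interior point sits on the boundary of some ball $\mathcal{U}_i(\mb{x})$ -- e.g. when $\mb{x}$ is a data point, so $\epsilon_i(\mb{x}) = 0$.
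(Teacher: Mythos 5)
Your proposal takes essentially the same route as the paper's proof: dualize the inner supremum over $(\mb{A},\mb{b})\in\mathcal{U}(\mb{x})$ (the paper uses second-order-cone multipliers $(\bs{\lambda}_i,\nu_i)$ and then sets $\nu_i=\Vert\bs{\lambda}_i\Vert_2$, which is exactly your slack-variable/Cauchy--Schwarz computation), obtaining the same equality constraints and the term $-\sum_i\big(\bs{\lambda}_i^\top\errF_i-\Vert\bs{\lambda}_i\Vert_2\,\epsilon_i(\mb{x})\big)$, and then replace the robust constraint by the dual with the $\bs{\lambda}_i$ lifted to decision variables. If anything, you are more explicit than the paper about the strong-duality/attainment caveats (unbounded $\mathcal{U}(\mb{x})$, or $\epsilon_i(\mb{x})=0$ at a data point, where Slater fails), which the paper's proof passes over silently.
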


\begin{proof}
An input $\mb{u} \in \R^m$ is feasible if the optimal value of the optimization problem:
\begin{align*}
    \sup_{\substack{\mb{A} \in \R^{n \times m} \\ \mb{b} \in \R^n }} &~ \widehat{\dot{\certfn}}(\mb{x}, \mb{u}) + \grad \certfn(\mb{x})^\top(\mb{b}+\mb{Au})\\
    \mathrm{s.t.}~~ \Vert & \mb{b}+\mb{Au}_i - \errF_i \Vert_2 \leq \epsilon_i(\mb{x} )~\mathrm{for~all}~ i\in 1,\ldots,N,
\end{align*}
is less than or equal to $-\alpha(\certfn(\mb{x}))$. Each inequality constraint can be rewritten as set membership in a second-order cone. That is, $(\mb{b} + \mb{A}\mb{u}_i - \widetilde{\mb{F}}_i, \epsilon_i(\mb{x})) \in \mathcal{Q}_n$ for each $i \in \{ 1, \dots, N \}$, with $\mathcal{Q}_n \subset \R^{n + 1}$ denoting the Lorentz cone $\mathcal{Q}_n = \left\{ (\mb{y}, t) \in \R^n \times \R: \Vert\mb{y}\Vert_2 \leq t \right\}$. This yields the following dual problem:
\begin{align*}
    \inf_{\substack{\bs{\lambda}_i \in \R^n \\ \nu_i \in \R }} &~ \widehat{\dot{\certfn}}(\mb{x}, \mb{u}) -\sum_{i=1}^N\left( \bs{\lambda}_i^\top \errF_i - \nu_i \epsilon_i(\mb{x})\right)\\
    \mathrm{s.t.} &~ \sum_{i=1}^N \bs{\lambda}_i\mb{u}_i^\top = -\grad \certfn(\mb{x})\mb{u}^\top, \quad
     \sum_{i=1}^N \bs{\lambda}_i = -\grad \certfn(\mb{x}) \nonumber\\
    &~ \Vert \bs{\lambda}_i \Vert_2 \leq \nu_i ~\mathrm{for~all}~ i\in 1,\ldots,N,
\end{align*}
where $(\bs{\lambda}_i, \nu_i) \in \mathcal{Q}_n$ are Lagrange multipliers corresponding to the constraints imposed by the data point $(\mb{x}_i, \mb{u}_i)$. For any $(\mb{x}_i, \mb{u}_i)$, choosing Lagrange multipliers such that $\nu_i > \Vert \bs{\lambda}_i\Vert_2$ increases the value of the dual problem compared to choosing $\nu_i = \Vert\bs{\lambda}_i\Vert_2$. We therefore simplify the problem as:
\begin{align*}
    \inf_{\bs{\lambda}_i \in \R^n } &~ \widehat{\dot{\certfn}}(\mb{x}, \mb{u}) -\sum_{i=1}\left( \bs{\lambda}_i^\top \errF_i - \Vert \bs{\lambda}_i\Vert_2 \epsilon_i(\mb{x})\right)\\
    \mathrm{s.t.} &~ \sum_{i=1}^N \bs{\lambda}_i \mb{u}_i^\top = -\grad \certfn(\mb{x})\mb{u}^\top, \quad
     \sum_{i=1}^N \bs{\lambda}_i = -\grad \certfn(\mb{x}).
\end{align*}
This optimization problem can then replace the original robust constraint in \eqref{eqn:robust_general} to yield the final optimization problem \eqref{eqn:robust_final}.
\end{proof}

\begin{remark}
Solving this optimization problem in a computationally efficient manner may require data segmentation for higher-dimensional problems \cite{lederer2020real}, but a detailed consideration is outside the scope of this work, which is focused on theoretical foundations. 
\end{remark}
\section{Feasibility}
\label{sec:analysis}
In this section we provide an analysis of the feasibility of the controller proposed in Section \ref{sec:datadriven}. The feasibility of the controller \eqref{eqn:robust_final} at a given state $\mb{x}$ is determined by the structure of the uncertainty set $\mathcal{U}(\mb{x})$ defined in \eqref{eqn:uncertainset}. The following lemma provides a condition on the inputs in the dataset that implies $\mathcal{U}(\mb{x})$ is bounded for all $\mb{x}\in\R^n$:

\begin{lemma}[Bounded Uncertainty Sets]
\label{lem:boundeduncertainty}
Consider a dataset $\mathfrak{D}$ with $N$ data points satisfying $N\geq m+1$. If there exists a set of data points $\{(\mb{x}_i,\mb{u}_i,\dot{\mb{x}}_i)\}_{i=1}^{m+1}\subseteq \mathfrak{D}$ such that the set of vectors:
\begin{equation}
\mathcal{M} \triangleq \left\{\begin{bmatrix}\mb{u}_i^\top & 1\end{bmatrix}^\top\right\}_{i=1}^{m+1},
\end{equation}
are linearly independent, then the uncertainty set $\mathcal{U}(\mb{x})$ is bounded (and thus compact) for any $\mb{x}\in\R^n$. 
\end{lemma}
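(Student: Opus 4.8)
The plan is to exploit the fact that membership in $\mathcal{U}(\mb{x})$ forces the vectors $\mb{b}+\mb{A}\mb{u}_i$ to be simultaneously bounded over the $m+1$ distinguished data points, and that — by linear independence of the augmented input vectors — this collection of linear images determines $(\mb{A},\mb{b})$ up to an invertible linear change of coordinates.

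First I would fix an arbitrary $\mb{x}\in\R^n$ and keep only the $m+1$ data points whose augmented input vectors $\mb{w}_i \triangleq \begin{bmatrix}\mb{u}_i^\top & 1\end{bmatrix}^\top$ are linearly independent. Since $\mathcal{U}(\mb{x}) \subseteq \bigcap_{i=1}^{m+1}\mathcal{U}_i(\mb{x})$, any $(\mb{A},\mb{b})\in\mathcal{U}(\mb{x})$ satisfies $\Vert\mb{b}+\mb{A}\mb{u}_i-\errF_i\Vert_2 \leq \epsilon_i(\mb{x})$, hence $\Vert\mb{b}+\mb{A}\mb{u}_i\Vert_2 \leq c_i$ with $c_i \triangleq \epsilon_i(\mb{x}) + \Vert\errF_i\Vert_2 < \infty$ (finite since $\mb{x}$ is fixed and $\errF_i$ is a data quantity). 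Writing $\mb{M} \triangleq \begin{bmatrix}\mb{A} & \mb{b}\end{bmatrix}\in\R^{n\times(m+1)}$ gives $\mb{b}+\mb{A}\mb{u}_i = \mb{M}\mb{w}_i$, so collecting columns yields the matrix identity $\mb{M}\mb{W} = \mb{Y}$, where $\mb{W} \triangleq \begin{bmatrix}\mb{w}_1 & \cdots & \mb{w}_{m+1}\end{bmatrix}$ and $\mb{Y}$ is the matrix whose $i$-th column is $\mb{M}\mb{w}_i$, so that $\Vert\mb{Y}\Vert_F \leq \left(\sum_{i=1}^{m+1} c_i^2\right)^{1/2}$.

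The hypothesis makes $\mb{W}$ invertible, so $\mb{M} = \mb{Y}\mb{W}^{-1}$ and, by submultiplicativity of the Frobenius and spectral norms, $\Vert\mb{M}\Vert_F \leq \left(\sum_{i=1}^{m+1} c_i^2\right)^{1/2}\Vert\mb{W}^{-1}\Vert_2$, a bound independent of the particular $(\mb{A},\mb{b})$; this bounds $\mb{A}$ and $\mb{b}$ separately, so $\mathcal{U}(\mb{x})$ is bounded. Since each $\mathcal{U}_i(\mb{x})$ is closed (as already noted below \eqref{eqn:uncertainelementset}), the intersection $\mathcal{U}(\mb{x})$ is closed, hence compact. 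The only point requiring (minor) care is seeing why linear independence is exactly the right hypothesis — it is precisely the condition under which the linear map $(\mb{A},\mb{b})\mapsto(\mb{M}\mb{w}_1,\dots,\mb{M}\mb{w}_{m+1})$ is injective — together with the observation that the remaining $N-(m+1)$ data points are not needed for boundedness since they only shrink $\mathcal{U}(\mb{x})$ further. Everything else is routine norm bookkeeping, so I do not anticipate a genuine obstacle.
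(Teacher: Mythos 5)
Your proposal is correct and follows essentially the same route as the paper: both arguments stack the $m+1$ constraints into a single matrix relation involving the augmented input matrix and use its full rank to bound $(\mb{A},\mb{b})$, with your explicit bound $\Vert\mb{M}\Vert_F \leq \left(\sum_{i=1}^{m+1} c_i^2\right)^{1/2}\Vert\mb{W}^{-1}\Vert_2$ being the same estimate the paper obtains via the smallest singular value $\sigma_{m+1}(\mb{U}) = 1/\Vert\mb{U}^{-1}\Vert_2$, the only cosmetic differences being that you apply the triangle inequality before stacking (the paper uses the reverse triangle inequality after) and work with Frobenius rather than spectral norms. Your closing observation that closedness of each $\mathcal{U}_i(\mb{x})$ yields compactness matches the paper's remark following \eqref{eqn:uncertainset}.
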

\begin{proof}
Consider the set $\mathcal{M}$, and define:
\begin{equation*}
\mathcal{M}_{\mb{u}} \triangleq \left\{\mb{u}_i\in\R^m ~\bigg{|}~ \begin{bmatrix}\mb{u}_i \\ 1 \end{bmatrix}\in\mathcal{M} \right\}.
\end{equation*}
For arbitrary $\mb{x}\in\R^n$, let $(\mb{A},\mb{b})\in\mathcal{U}(\mb{x})$. By definition of the individual uncertainty sets $\mathcal{U}_i(\mb{x})$ and $\mathcal{U}(\mb{x})$ given in \eqref{eqn:uncertainelementset} and \eqref{eqn:uncertainset}, respectively, we have that:
\begin{equation}
\label{eqn:uncertainbound}
\left\Vert\mb{Au}_i+\mb{b}-\widetilde{\mb{F}}_i\right\Vert_2 \leq \epsilon_i(\mb{x}),
\end{equation}
for $\mb{u}_i\in\mathcal{M}_{\mb{u}}$. Defining $\mb{v}_i \triangleq \mb{Au}_i+\mb{b}-\widetilde{\mb{F}}_i$, we have that:
\begin{equation*}
    \sum_{i=1}^{m+1}\left\Vert\mb{Au}_i+\mb{b}-\widetilde{\mb{F}}_i\right\Vert_2^2= \sum_{i=1}^{m+1}\sum_{j=1}^{n} \vert V_{ji} \vert^2 = \Vert \mb{V} \Vert_F^2,
\end{equation*}
where $\Vert\cdot\Vert_F$ is the Frobenius norm and $\mb{V} = \begin{bmatrix} \mb{v}_1 & \cdots & \mb{v}_{m+1} \end{bmatrix}$. Noting that $\Vert \mb{V} \Vert_F=\Vert \mb{V}^\top \Vert_F$, factoring and using the fact $\Vert\mb{P}\Vert_2 \leq \Vert\mb{P}\Vert_F$ for any $\mb{P}\in\R^{m+1 \times n}$ in conjunction with \eqref{eqn:uncertainbound} yields:
\begin{equation*}
     \left\Vert \begin{bmatrix} \mb{u}_1^\top & 1 \\ \vdots & \vdots \\  \mb{u}_{m+1}^\top & 1 \end{bmatrix}\begin{bmatrix} \mb{A}^\top \\ \mb{b}^\top \end{bmatrix} - \begin{bmatrix} \errF_1^\top \\ \vdots \\ \errF_{m+1}^\top \end{bmatrix} \right\Vert_2^2 \leq \sum_{i=1}^{m+1}\epsilon_i(\mb{x})^2.
\end{equation*}
Taking the square root of both sides and employing the reverse triangle inequality we arrive at:
\begin{equation*}
     \left\Vert \underbrace{\begin{bmatrix} \mb{u}_1^\top & 1 \\ \vdots & \vdots \\  \mb{u}_{m+1}^\top & 1 \end{bmatrix}}_{\mb{U}}\begin{bmatrix} \mb{A}^\top \\ \mb{b}^\top \end{bmatrix} \right\Vert_2 \leq \sqrt{\sum_{i=1}^{m+1}\epsilon_i(\mb{x})^2} + \left\Vert \begin{bmatrix} \errF_1^\top \\ \vdots \\ \errF_{m+1}^\top \end{bmatrix} \right\Vert_2.
\end{equation*}
Noting that $\mb{U}$ has full rank by the linear independence of the vectors in the set $\mathcal{M}$, we have that:
\begin{equation*}
    \sigma_{m+1}(\mb{U})\left\Vert \begin{bmatrix} \mb{A}^\top \\ \mb{b}^\top \end{bmatrix} \right\Vert_2 \leq \left\Vert \begin{bmatrix} \mb{u}_1^\top & 1 \\ \vdots & \vdots \\  \mb{u}_{m+1}^\top & 1 \end{bmatrix}\begin{bmatrix} \mb{A}^\top \\ \mb{b}^\top \end{bmatrix} \right\Vert_2,
\end{equation*}
where $\sigma_{m+1}(\mb{U})>0$ is the smallest singular value of $\mb{U}$. Combining the two previous inequalities shows that the uncertainty set $\mathcal{U}(\mb{x})$ is bounded.
\end{proof}

This result shows that variety in input directions is sufficient to assert boundedness of the uncertainty set as a uniform property over the entire state space. As seen in the proof, the bound on this set may be very large if the values of $\epsilon_i(\mb{x})$ are large (as is the case when $\mb{x}$ is far away from the data points $\mb{x}_i$ associated with the inputs used to construct the bound). Alternatively, the uncertainty set will be small for a point $\mathbf{x}$ if there is local variety in input directions within the training dataset.
To understand how the size of $\mathcal{U}(\mb{x})$ impacts feasibility of the optimization problem, we define the following set:
\begin{align}
\label{eqn:uncertainlie}
\widetilde{\mathcal{U}}_\certfn(\mb{x}) \triangleq \Big\{(&\mb{a},b)\in\R^m\times\R ~|~ \exists~(\mb{A},\mb{b})\in\mathcal{U}(\mb{x})~\textrm{s.t.}~  \nonumber \\ &\mb{a} = (\grad\certfn(\mb{x})^\top\mb{A})^\top,~b =\grad\certfn(\mb{x})^\top\mb{b}  \Big\},  
\end{align}
The set $\widetilde{\mathcal{U}}_\certfn(\mb{x})$ can be interpreted as the projection of the dynamics uncertainty set $\mathcal{U}(\mb{x})$ along the gradient of the control certificate function $\certfn$, creating an $m+1$ dimensional set which represents the possible uncertainties in the \textit{Lie derivatives} of $\certfn$. Additionally define the set:
\begin{equation}
\mathcal{U}_\certfn(\mb{x}) \triangleq \left \{ \left( L_{\widehat{\mb{g}}}\certfn(\mb{x})^\top,L_{\widehat{\mb{f}}}\certfn(\mb{x}) \right) \right\} \oplus \widetilde{\mathcal{U}}_\certfn(\mb{x}),
\label{eqn:uncertainliecentered}
\end{equation}
where $\oplus$ denotes a Minkowski sum. The set $\mathcal{U}_\certfn(\mb{x})$ is the recentering of $\widetilde{\mathcal{U}}_\certfn(\mb{x})$ set around the estimate of the Lie derivatives $(L_{\widehat{\mb{g}}}\certfn(\mb{x})^\top,L_{\widehat{\mb{f}}}\certfn(\mb{x}))\in\R^{m+1}$, such that it captures the possible true Lie derivatives of $\certfn$. As multiplication by $\grad\certfn(\mb{x})^\top$ is a linear transformation, $\widetilde{\mathcal{U}}_\certfn(\mb{x})$ and $\mathcal{U}_\certfn(\mb{x})$ are convex, and if $\mathcal{U}(\mb{x})$ is bounded (and therefore compact), then $\widetilde{\mathcal{U}}_\certfn(\mb{x})$ and $\mathcal{U}_\certfn(\mb{x})$ are compact. 

We present our second main result in the form of a necessary and sufficient condition for feasibility of \eqref{eqn:robust_general}:

\begin{theorem}[Feasibility of Data-Driven Robust Controller]
\label{thm:feasible}
For a state $\mb{x}\in\R^n$, let the sets $\mathcal{U}(\mb{x})$ and $\mathcal{U}_\certfn(\mb{x})$ be defined as in \eqref{eqn:uncertainset} and \eqref{eqn:uncertainliecentered}, respectively. Define the ray $\mathcal{R}\subset\R^{m+1}$ as $\mathcal{R} = \{\mb{0}_m\} \times (-\alpha(\certfn(\mb{x})),\infty)$. Assuming that $~\mathcal{U}(\mb{x})$ is bounded, the data-driven robust controller \eqref{eqn:robust_final} is feasible if and only if:
\begin{equation}
    \mathcal{U}_\certfn(\mb{x}) \cap \mathcal{R} = \emptyset.
\end{equation}
\end{theorem}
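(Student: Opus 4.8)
The plan is to recast feasibility of \eqref{eqn:robust_final} as a containment statement about $\mathcal{U}_\certfn(\mb{x})$, settle the easy direction directly, and settle the converse by convex separation. First, by Theorem~\ref{thm:robustcontrolsynthesis} the problems \eqref{eqn:robust_final} and \eqref{eqn:robust_general} have the same feasible set, so I would work with the robust constraint of \eqref{eqn:robust_general}. Writing $\certhat(\mb{x},\mb{u}) = L_{\widehat{\mb{f}}}\certfn(\mb{x}) + L_{\widehat{\mb{g}}}\certfn(\mb{x})\mb{u}$ and $\grad\certfn(\mb{x})^\top(\mb{b}+\mb{A}\mb{u}) = \grad\certfn(\mb{x})^\top\mb{b} + (\grad\certfn(\mb{x})^\top\mb{A})\mb{u}$, and using \eqref{eqn:uncertainliecentered} to recognize that the pair $( (\grad\certfn(\mb{x})^\top\mb{A})^\top + L_{\widehat{\mb{g}}}\certfn(\mb{x})^\top,\ \grad\certfn(\mb{x})^\top\mb{b} + L_{\widehat{\mb{f}}}\certfn(\mb{x}) )$ ranges over exactly $\mathcal{U}_\certfn(\mb{x})$ as $(\mb{A},\mb{b})$ ranges over $\mathcal{U}(\mb{x})$, one sees that $\mb{u}$ is feasible if and only if $\mathcal{U}_\certfn(\mb{x}) \subseteq H_{\mb{u}}$, where $H_{\mb{u}} \triangleq \{(\mb{a},b)\in\R^{m+1} : \mb{a}^\top\mb{u}+b \le -\alpha(\certfn(\mb{x}))\}$ is the closed half-space whose boundary hyperplane has normal $(\mb{u},1)$. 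So feasibility says precisely that some hyperplane with last-coordinate-one normal sandwiches $\mathcal{U}_\certfn(\mb{x})$ below the level $-\alpha(\certfn(\mb{x}))$. The forward direction is then immediate: if such a $\mb{u}$ exists, each $(\mb{0}_m,t) \in \mathcal{R}$ obeys $\mb{0}_m^\top\mb{u}+t = t > -\alpha(\certfn(\mb{x}))$, so $(\mb{0}_m,t)\notin H_{\mb{u}} \supseteq \mathcal{U}_\certfn(\mb{x})$ and hence $\mathcal{U}_\certfn(\mb{x}) \cap \mathcal{R} = \emptyset$.

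For the converse, assume $\mathcal{U}_\certfn(\mb{x})\cap\mathcal{R}=\emptyset$. By the boundedness hypothesis on $\mathcal{U}(\mb{x})$ (which Lemma~\ref{lem:boundeduncertainty} secures under a mild condition on the input data) together with the observation in the text that linear images of $\mathcal{U}(\mb{x})$ stay compact, $\mathcal{U}_\certfn(\mb{x})$ is compact and convex. Let $\pi$ denote the projection of $\mathcal{U}_\certfn(\mb{x})$ onto its first $m$ coordinates. If $\mb{0}_m\notin\pi$, then $\pi$ and $\{\mb{0}_m\}$ are disjoint compact convex sets, so strict separation yields $\mb{w}$ with $\mb{w}^\top\mb{a}\le-\delta<0$ for all $\mb{a}\in\pi$; taking $\mb{u}=\beta\mb{w}$ with $\beta$ large makes $\sup_{(\mb{a},b)\in\mathcal{U}_\certfn(\mb{x})}(\mb{a}^\top\mb{u}+b)\le-\beta\delta+\max_{(\mb{a},b)\in\mathcal{U}_\certfn(\mb{x})}b\le-\alpha(\certfn(\mb{x}))$, so $\mb{u}$ is feasible. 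If instead $\mb{0}_m\in\pi$, then $\{b:(\mb{0}_m,b)\in\mathcal{U}_\certfn(\mb{x})\}$ is a nonempty compact interval $[b_-,b_+]$ and the disjointness forces $b_+\le-\alpha(\certfn(\mb{x}))$; I would then finish by producing a supporting hyperplane of $\mathcal{U}_\certfn(\mb{x})$ at the point $(\mb{0}_m,b_+)$ whose normal has the form $(\mb{u},1)$, which by the characterization above is exactly a feasible $\mb{u}$. A compact repackaging of the converse is the minimax identity $\inf_{\mb{u}\in\R^m}\sup_{(\mb{a},b)\in\mathcal{U}_\certfn(\mb{x})}(\mb{a}^\top\mb{u}+b)=\sup\{b:(\mb{0}_m,b)\in\mathcal{U}_\certfn(\mb{x})\}$, valid since $\mathcal{U}_\certfn(\mb{x})$ is compact convex and the objective is bilinear; its right-hand side is $\le-\alpha(\certfn(\mb{x}))$ precisely when $\mathcal{U}_\certfn(\mb{x})\cap\mathcal{R}=\emptyset$.

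The step I expect to be the main obstacle is the sub-case $\mb{0}_m\in\pi$: showing $\mathcal{U}_\certfn(\mb{x})$ admits a supporting hyperplane at $(\mb{0}_m,b_+)$ with strictly positive last normal coordinate, equivalently that the outer infimum in the minimax identity is attained rather than merely approached. This requires more than compactness, and is where the structure of $\mathcal{U}_\certfn(\mb{x})$ must be used: it is a translate of the linear image of $\bigcap_{i=1}^N\{\|\mb{b}+\mb{A}\mb{u}_i-\errF_i\|_2\le\epsilon_i(\mb{x})\}$, and each data point contributes constraints whose normal has last coordinate $1$, so the recentered set inherits supporting directions of the form $(\mb{u},1)$ along the faces that matter here. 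The remaining pieces (that the fiber over $\mb{0}_m$ is a compact interval, the choice of $\beta$, and the signs of the separating functionals) are routine.
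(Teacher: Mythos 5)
Your reduction of feasibility to the half-space containment $\mathcal{U}_\certfn(\mb{x})\subseteq H_{\mb{u}}$, the necessity direction, and the converse in the case $\mb{0}_m\notin\bs{\Pi}(\mathcal{U}_\certfn(\mb{x}))$ are all correct, and your minimax identity (valid by Sion's theorem, since $\mathcal{U}_\certfn(\mb{x})$ is compact convex and the objective is bilinear) also correctly settles every case in which the fiber $\{b:(\mb{0}_m,b)\in\mathcal{U}_\certfn(\mb{x})\}$ stays strictly below $-\alpha(\certfn(\mb{x}))$, since then the infimum over $\mb{u}$ is strictly less than $-\alpha(\certfn(\mb{x}))$ and some $\mb{u}$ achieves the constraint. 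Up to that point your route is genuinely different from, and cleaner than, the paper's: the paper instead separates only the projection of the super-level part $\overline{\mathcal{U}}_\certfn(\mb{x})=\{(\mb{p},q)\in\mathcal{U}_\certfn(\mb{x}):q\geq-\alpha(\certfn(\mb{x}))\}$ from the origin, scales the input along the separating direction, tunes the scale $\gamma^*$ by a Hausdorff-continuity/intermediate-value argument, and rules out violations in the opposite half-space by a convexity contradiction.

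The genuine gap is exactly the sub-case you flag: $\mb{0}_m$ in the projection with $b_+=-\alpha(\certfn(\mb{x}))$, i.e.\ attainment of the infimum. Your proposed patch (a supporting hyperplane of $\mathcal{U}_\certfn(\mb{x})$ at $(\mb{0}_m,b_+)$ with normal of the form $(\mb{u},1)$, justified by the remark that the data constraints have last normal coordinate one) is not a proof: at that point your argument uses only compactness and convexity of $\mathcal{U}_\certfn(\mb{x})$, and with those hypotheses alone the claim is false. For $m=1$, take the closed disk of radius $1$ centered at $(1,-\alpha(\certfn(\mb{x})))$: it is compact, convex, meets the axis $\{\mb{0}_m\}\times\R$ only at $(0,-\alpha(\certfn(\mb{x})))$ and hence is disjoint from the open ray $\mathcal{R}$, yet $\sup_{(a,b)}(au+b)=u+\sqrt{u^2+1}-\alpha(\certfn(\mb{x}))>-\alpha(\certfn(\mb{x}))$ for every $u$, so no feasible input exists and the only supporting hyperplane at the tangency point is vertical. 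Closing your sub-case therefore requires either genuinely exploiting how $\mathcal{U}_\certfn(\mb{x})$ is built from the data constraints or showing this tangential configuration cannot arise, neither of which your sketch supplies. To be fair, this is also where the paper's own argument is thinnest: it asserts $\bs{\Pi}(\overline{\mathcal{U}}_\certfn(\mb{x}))\cap\{\mb{0}_m\}=\emptyset$ ``by assumption,'' which silently excludes the boundary point $(\mb{0}_m,-\alpha(\certfn(\mb{x})))$ that the open ray $\mathcal{R}$ permits; you have identified a real subtlety, but as written your proposal does not resolve it, so the sufficiency direction remains incomplete.
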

Intuitively, the ray $\mathcal{R}$ represents Lie derivative pairs that do not satisfy the certificate function condition with no actuation and cannot be modified through actuation. By Assumption \ref{ass:wellposed}, the Lie derivatives of the true system are not contained in $\mathcal{R}$, but the possible uncertainties permitted by data need not necessarily reflect this. If one possible uncertainty pair in the set $~\mathcal{U}_\certfn(\mb{x})$ is contained in $\mathcal{R}$, it is impossible to meet the certificate condition for that uncertainty pair.

\begin{proof}
The proof proceeds from the original structure of the robust control problem \eqref{eqn:robust_general}. The constraint on the input $\mb{u}^*$ specified by this controller is given by:
\begin{equation*}
    L_{\widehat{\mb{f}}}\certfn(\mb{x})+\grad\certfn(\mb{x})^\top\mb{b} + (L_{\widehat{\mb{g}}}\certfn(\mb{x})+ \grad\certfn(\mb{x})^\top\mb{A})\mb{u}^* \leq -\alpha(\certfn(\mb{x})).
\end{equation*}
for all $(\mb{A},\mb{b})\in\mathcal{U}(\mb{x})$. Given the definitions of $\widetilde{\mathcal{U}}_\certfn(\mb{x})$ and $\mathcal{U}_\certfn(\mb{x})$ in \eqref{eqn:uncertainlie} and \eqref{eqn:uncertainliecentered}, this constraint can be expressed as:
\begin{equation}
\label{eqn:simpleconstraint}
    q+\mb{p}^\top\mb{u}^* \leq -\alpha(\certfn(\mb{x}))
\end{equation}
for all $(\mb{p},q)\in\mathcal{U}_\certfn(\mb{x})$.

\subsubsection{Necessity}
Assume that $\mathcal{U}_\certfn(\mb{x}) \cap \mathcal{R} \neq \emptyset$. This implies that there exists a $(\mb{p},q)\in\mathcal{U}_\certfn(\mb{x})$ such that $q > -\alpha(\certfn(\mb{x}))$ and $\mb{p}=\mb{0}_m$. Thus for any input $\mb{u}\in\R^m$, we have that:
\begin{equation*}
    q + \mb{p}^\top\mb{u} = q > -\alpha(\certfn(\mb{x})),
\end{equation*}
violating the constraint in \eqref{eqn:simpleconstraint}. Thus the optimization problem is infeasible.

\subsubsection{Sufficiency}
Begin by defining the hyperplane $\mathcal{H}\subset\R^{m+1}$ with unit normal $\mb{n} = \begin{bmatrix} \mb{0}_m^\top & 1 \end{bmatrix}^\top$ and offset $-\alpha(\certfn(\mb{x}))$ and define the set:
\begin{equation*}
    \overline{\mathcal{U}}_\certfn(\mb{x}) = \left\{(\mb{p},q)\in\mathcal{U}_\certfn(\mb{x}) ~\big | ~ \left\langle \begin{bmatrix} \mb{0}_m \\ 1 \end{bmatrix},\begin{bmatrix} \mb{p} \\ q \end{bmatrix} \right\rangle \geq -\alpha(\certfn(\mb{x})) \right\},
\end{equation*}
which corresponds to the Lie derivative pairs in the set $~\mathcal{U}_\certfn(\mb{x})$ that do not meet the certificate function condition \eqref{eqn:simpleconstraint} strictly under no input ($\mb{u}^*=\mb{0}_m$). 
Note that if $\mb{u}^*=\mb{0}_m$ satisfies \eqref{eqn:simpleconstraint} for all $(\mb{p},q)\in\overline{\mathcal{U}}_\certfn(\mb{x})$, then $\mb{u}^*=\mb{0}_m$ satisfies \eqref{eqn:simpleconstraint} for all $(\mb{p},q)\in\mathcal{U}_\certfn(\mb{x})$. 

We also note that the set $~ \overline{\mathcal{U}}_\certfn(\mb{x})$ is a closed subset of the compact and convex set $~\mathcal{U}_\certfn(\mb{x})$ that is also contained in a (convex) half-space defined by the hyperplane $\mathcal{H}$, meaning $~\overline{\mathcal{U}}_\certfn(\mb{x})$ is compact and convex (as it is the intersection of convex sets).
Therefore, we can define:
\begin{equation*}
    q^{\star\star} = \max_{(\mb{p}, q) \in \overline{\mathcal{U}}_\certfn(\mb{x})} q = \max_{(\mb{p}, q) \in \overline{\mathcal{U}}_\certfn(\mb{x})} \left\langle \begin{bmatrix} \mb{0}_m \\ 1 \end{bmatrix}, \begin{bmatrix} \mb{p} \\ q \end{bmatrix} \right\rangle,
\end{equation*}
which exists as the function $\left\langle\begin{bmatrix}\mb{0}_m^\top & 1 \end{bmatrix}^\top,\cdot\right\rangle:\overline{\mathcal{U}}_\certfn(\mb{x})\to\R$ is continuous on a compact domain.
We consider the case that $q^{\star\star}>-\alpha(\certfn(\mb{x}))$, as otherwise $\mb{u}^*=\mb{0}_m$ satisfies \eqref{eqn:simpleconstraint}.

Define the projection function $\bs{\Pi}:\R^m\times\R\to\R^m$ such that for $(\mb{p},q)\in\R^m\times\R$ we have:
\begin{equation*}
    \bs{\Pi}((\mb{p},q)) = \mb{p}.
\end{equation*}
As $\bs{\Pi}$ is a linear transform, the image of the compact and convex set $\overline{\mathcal{U}}_\certfn(\mb{x})$ under $\bs{\Pi}$, denoted as $\bs{\Pi}(\overline{\mathcal{U}}_\certfn(\mb{x}))$, is compact and convex and by assumption satisfies:
\begin{equation*}
    \bs{\Pi}(\overline{\mathcal{U}}_\certfn(\mb{x})) \cap \{\mb{0}_m\} = \emptyset.
\end{equation*}
We may use the strict separating hyperplane theorem \cite{boyd2004convex} to separate the set $ \bs{\Pi}(\overline{\mathcal{U}}_\certfn(\mb{x}))$ from $\{\mb{0}_m\}$ with the hyperplane $\mathcal{H}_\beta$ with unit normal $\mb{s}\in S^{m-1}$ and offset $\beta\in\R_{++}$. This hyperplane can also be shifted to pass through the origin, given by $\mathcal{H}_0$ (with unit normal $\mb{s}\in S^{m-1}$ and offset $0$). This results in the configuration seen in Figure \ref{fig:feasibility_fig_hard_b}. 


These hyperplanes can be extended back into the ambient space $\R^m\times\R$ by defining hyperplanes $\mathcal{H}_\beta'\subset\R^m\times\R$ and $\mathcal{H}_0'\subset\R^m\times\R$ with normal vector $\mb{s}' = \begin{bmatrix} \mb{s}^\top & 0 \end{bmatrix}^\top$ and respective offsets $\beta$ and $0$. These hyperplanes serve to separate the vertical axis ${\mb{0}_m}\times\R$ from the cylinder defined by $\bs{\Pi}(\overline{\mathcal{U}}_\certfn(\mb{x}))\times\R$. Define the two following open halfspaces:
\begin{align*}
    \mathcal{H}_0^+ &\triangleq \left \{(\mb{p},q)\in\R^m\times\R ~\bigg|~  \left \langle \mb{s}', \begin{bmatrix}\mb{p} \\ q \end{bmatrix} \right \rangle = \mb{p}^\top\mb{s}  > 0 \right\} \\
    \mathcal{H}_0^- &\triangleq \left \{(\mb{p},q)\in\R^m\times\R ~\bigg|~  \left \langle \mb{s}', \begin{bmatrix}\mb{p} \\ q \end{bmatrix} \right \rangle  = \mb{p}^\top\mb{s}  < 0 \right\}
\end{align*}
We will now find a feasible input that lies anti-parallel to $\mb{s}$, i.e., we will consider inputs of the form $\mb{u}^*=-\gamma\mb{s}$ for $\gamma \geq 0$ throughout the rest of this proof. Finding a feasible input satisfying constraint \eqref{eqn:simpleconstraint} amounts to finding a $\gamma^*$ such that $q-\gamma^*\mb{p}^\top\mb{s}\leq-\alpha(\certfn(\mb{x}))$ for all $(\mb{p},q)\in\mathcal{U}_\certfn(\mb{x})$. By definition we have that for any $(\mb{p},q)\in\overline{\mathcal{U}}_\certfn(\mb{x})$:
\begin{equation*}
 \left\langle \mb{s}',\begin{bmatrix} \mb{p} \\ q \end{bmatrix}\right\rangle = \mb{s}^\top\mb{p} \geq \beta.
\end{equation*}
Let $\gamma_0 >0$ be defined as:
\begin{equation*}
\label{eqn:gamma_0}
    \gamma_0 = \frac{q^{\star\star}+\alpha(\certfn(\mb{x}))}{\beta} > 0.
\end{equation*}
This value of $\gamma_0$ implies that for any $(\mb{p},q)\in\overline{\mathcal{U}}_\certfn(\mb{x})$ we have:
\begin{equation*}
    q-\gamma_0\mb{p}^\top\mb{s} \leq q^{\star\star} -\gamma_0\beta \leq -\alpha(\certfn(\mb{x})),
\end{equation*}
Define the set:
\begin{equation*}
    \mathcal{V}_\certfn(\mb{x}, \gamma) = \left\{ (\mb{p}, q - \gamma \mb{p}^\top \mb{s}) \in \R^m \times \R ~|~ (\mb{p}, q) \in \overline{\mathcal{U}}_\certfn(\mb{x}) \right\},
\end{equation*}
and the function:
\begin{equation*}
    \psi(\gamma)  = \max_{(\mb{p},q)\in\mathcal{V}_\certfn(\mb{x}, \gamma)} q,
\end{equation*}
that satisfies $\psi(0) = q^{\star\star}$ and $\psi(\gamma_0) \leq - \alpha(\certfn(\mb{x}))$. We later show that $\psi$ is continuous, but note the intermediate value theorem implies the existence of a $\gamma^*\in(0,\gamma_0]$ with:
\begin{equation*}
    \psi(\gamma^*) = \max_{(\mb{p},q)\in\mathcal{V}_\certfn(\mb{x}, \gamma^*)} q = -\alpha(\certfn(\mb{x})).
\end{equation*}

Fixing $\mb{u}^* = -\gamma^*\mb{s}$, we have that condition \eqref{eqn:simpleconstraint} is satisfied for all points $(\mb{p},q)\in\mathcal{H}_0^+\cap~\overline{\mathcal{U}}_\certfn(\mb{x})$. Likewise, consider a point $(\mb{p},q)\in\mathcal{H}_0^+\cap(\mathcal{U}_\certfn(\mb{x})\setminus \overline{\mathcal{U}}_\certfn(\mb{x}))$, for which the condition \eqref{eqn:simpleconstraint} is satisfied with no input as $q<-\alpha(\certfn(\mb{x}))$, and note that $\mb{p}^\top\mb{s}>0$. We have that: 
\begin{equation*}
    q + \mb{p}^\top\mb{u}^* = q -\gamma^*\mb{p}^\top\mb{s} < -\alpha(\certfn(\mb{x})).
\end{equation*}
Combining these results, we have that condition \eqref{eqn:simpleconstraint} is satisfied for all $(\mb{p},q)\in\mathcal{H}_0^+\cap\mathcal{U}_\certfn(\mb{x})$. Additionally, for any $(\mb{p},q)\in\mathcal{H}_0'\cap\mathcal{U}_\certfn(\mb{x})$, $\mb{p}^\top\mb{s}=0$, as $\mb{s}'$ is normal to $\mathcal{H}_0'$. As strict separation implies that $q<-\alpha(\certfn(\mb{x}))$ for any $(\mb{p},q)\in\mathcal{H}_0'\cap\mathcal{U}_\certfn(\mb{x})$, we have $q - \gamma^*\mb{p}^\top\mb{s} < -\alpha(\certfn(\mb{x}))$ and condition \eqref{eqn:simpleconstraint} is satisfied.

\begin{figure}[t]
    \centering
    \includegraphics[scale=0.4]{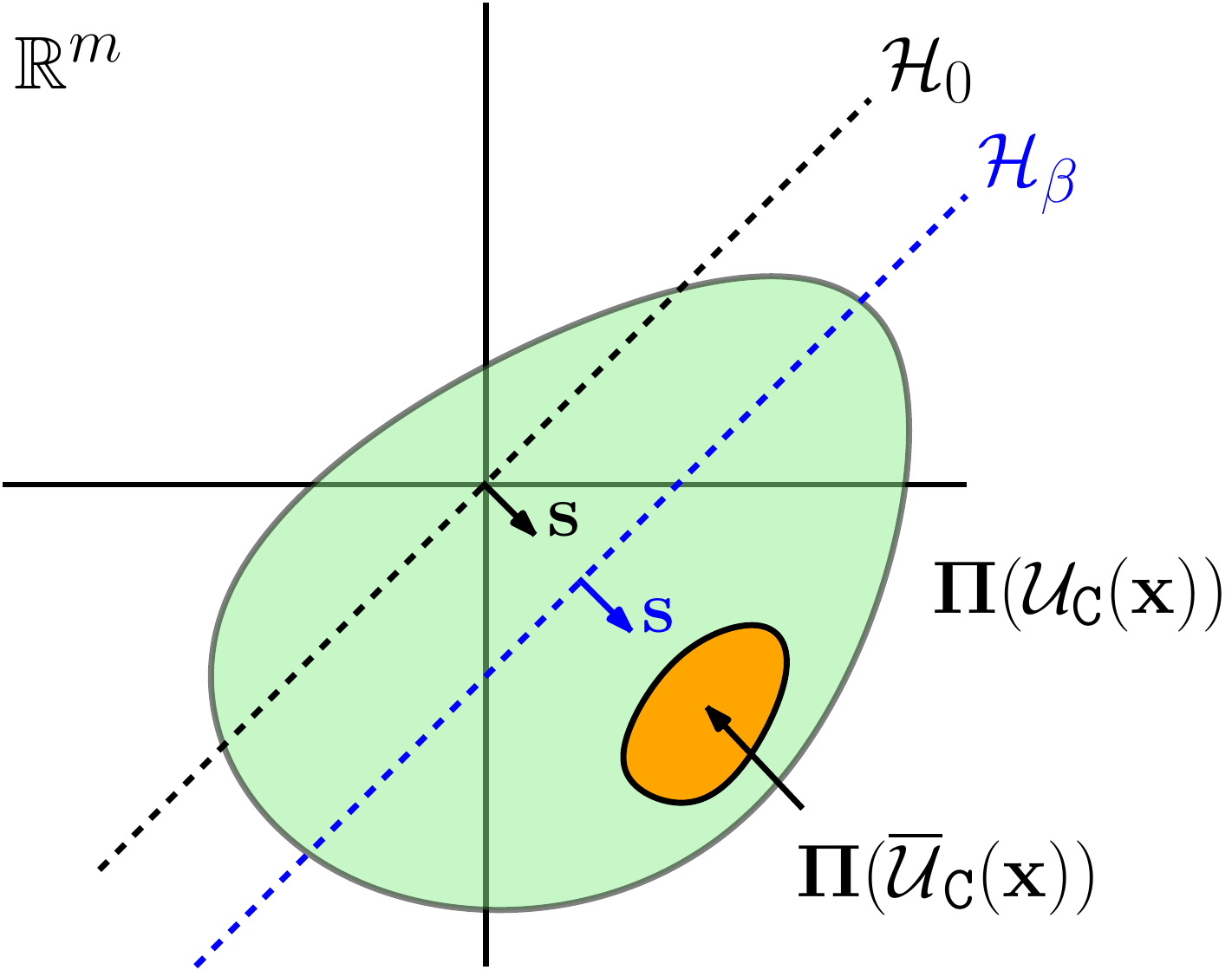}
    \caption{Projected view of Lie derivative uncertainty set used in proving feasibility of the \eqref{eqn:robust_general}. The green region corresponds to $\bs{\Pi}(\mathcal{U}_\certfn(\mb{x}))$ while the orange region highlights the set $\overline{\mathcal{U}}_\certfn(\mb{x})$ lying above the hyperplane $\mathcal{H}$. The line $\mathcal{H}_\beta$ represents the strictly separating hyperplane with normal $\mb{s}$ and offset $\beta$, while $\mathcal{H}_0$ is the hyperplane shifted to pass through the origin.}
    \label{fig:feasibility_fig_hard_b}
\end{figure}

Lastly, we must consider points $(\mb{p},q)\in\mathcal{H}_0^-\cap\mathcal{U}_\certfn(\mb{x})$. To this end, define the set:
\begin{equation*}
    \mathcal{W}_\certfn(\mb{x}, \gamma^*) = \left\{ (\mb{p}, q - \gamma^* \mb{p}^\top \mb{s}) \in \R^m \times \R ~|~ (\mb{p}, q) \in \mathcal{U}_\certfn(\mb{x}) \right\}.
\end{equation*}
This set can be interpreted as an invertible linear transformation of the set $\mathcal{U}_\certfn(\mb{x})$ as we have:
\begin{equation*}
    \begin{bmatrix} \mb{p}' \\ q' \end{bmatrix} = \begin{bmatrix} \mb{p} \\ q - \gamma^* \mb{p}^\top \mb{s} \end{bmatrix} = \begin{bmatrix} \mb{I}_{m\times m} & \mb{0}_m \\ -\gamma^* \mb{s}^\top & 1 \end{bmatrix} \begin{bmatrix} \mb{p} \\ q \end{bmatrix},
\end{equation*}
implying that $ \mathcal{W}_\certfn(\mb{x},\gamma^*)$ is compact and convex. Similarly, define the set:
\begin{align*}
    \overline{\mathcal{W}}_\certfn(\mb{x},\gamma^*) = \bigg\{(\mb{p},q)\in&~\mathcal{W}_\certfn(\mb{x},\gamma^*) ~\bigg | ~ \\ & \left\langle \begin{bmatrix} \mb{0}_m \\ 1 \end{bmatrix},\begin{bmatrix} \mb{p} \\ q \end{bmatrix} \right\rangle \geq -\alpha(\certfn(\mb{x})) \bigg\}.
\end{align*}
The set $~ \overline{\mathcal{W}}_\certfn(\mb{x},\gamma^*)$ is convex and contains points in $\mathcal{W}_\certfn(\mb{x},\gamma^*)$ that are in or above the hyperplane $\mathcal{H}$, or points that meet with equality or violate \eqref{eqn:simpleconstraint}, respectively.

By the specific choice of $\gamma^*$, there exists at least one point $\mb{v}^+\in\mathcal{H}_0^+\cap\overline{\mathcal{W}}_\certfn(\mb{x},\gamma^*)$. Furthermore, by the preceding strict separation, we have that $\mathcal{H}'_0\cap\overline{\mathcal{W}}_\certfn(\mb{x},\gamma^*) = \emptyset$. Now assume for contradiction that $\mathcal{H}_0^- \cap \overline{\mathcal{W}}_\certfn(\mb{x},\gamma^*) \neq \emptyset$, or that there exists $\mb{v}^-\in \mathcal{H}_0^-  \cap \overline{\mathcal{W}}_\certfn(\mb{x},\gamma^*)$. As the set $~\overline{\mathcal{W}}_\certfn(\mb{x},\gamma^*)$ is convex and $\langle\mb{s}',\mb{v}^+\rangle>0$ and $\langle\mb{s}',\mb{v}^-\rangle<0$, there is a $\lambda^* \in (0, 1)$ satisfying
\begin{equation*}
    \langle\mb{s}' , (1 - \lambda^*) \mb{v}^+ + \lambda^* \mb{v}^-\rangle = 0,
\end{equation*}
implying $(1 - \lambda^*) \mb{v}^+ + \lambda^* \mb{v}^-\in\mathcal{H}'_0\cap\overline{\mathcal{W}}_\certfn(\mb{x},\gamma^*)$. This contradicts the fact $\mathcal{H}'_0\cap\overline{\mathcal{W}}_\certfn(\mb{x},\gamma^*) = \emptyset$, implying that $\mathcal{H}^-_0\cap\overline{\mathcal{W}}_\certfn(\mb{x},\gamma^*) = \emptyset$, or that condition \eqref{eqn:simpleconstraint} is satisfied for all points $(\mb{p}, q) \in\mathcal{H}_0^{-}\cap \mathcal{U}_\certfn(\mb{x})$. Combining all previous results, condition \eqref{eqn:simpleconstraint} is satisfied for all points $(\mb{p},q)\in\mathcal{U}_\certfn(\mb{x})$ for the input $\mb{u}^* = -\gamma^*\mb{s}$, ensuring feasibility.

\begin{figure*}[b]
    \centering
    \includegraphics[width=\textwidth]{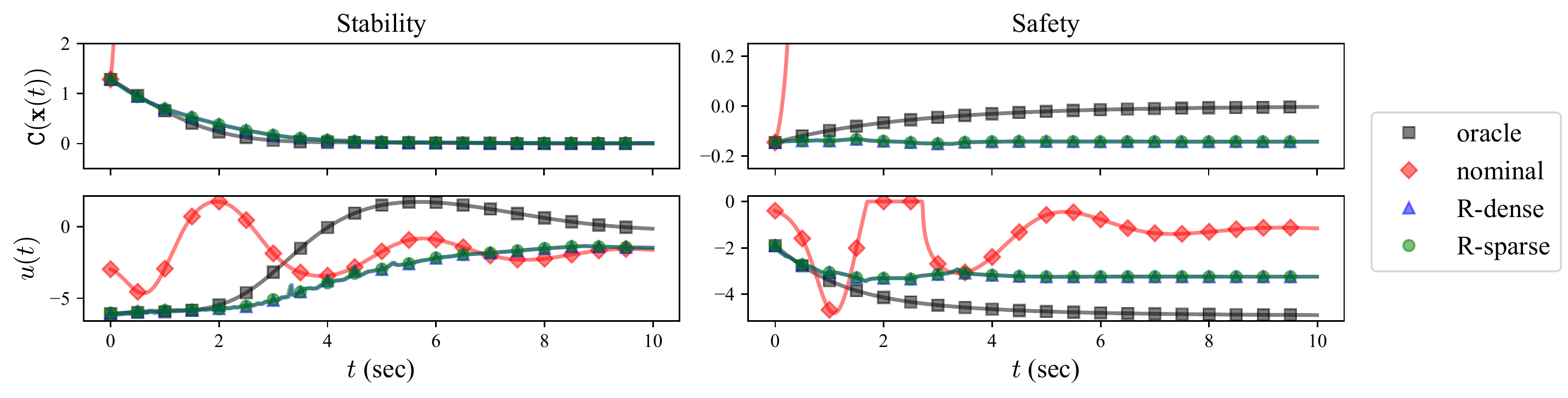}
    \caption{The value of the certificate function $\certfn(\mathbf x)$ (top) and input $\mathbf u$ (bottom) over time for the stability (left) and safety (right) experiments. 
    We compare the behavior of an oracle controller which has access to the true dynamics (black square), a nominal controller which assumes that the estimated dynamics are true (red diamond), a robust controller using dense input data (blue triangle), and a robust controller using sparse input data (green circle).
    For stability, the desired behavior for $\certfn(\mathbf x)$ is to converge towards $0$, while for safety it is to remain nonpositive.}
    \label{fig:sim_results}
\end{figure*}

Finally, we show that $\psi$ is continuous. For $\gamma, \gamma' > 0$ and $(\mb{p}, q) \in \overline{\mathcal{U}}_\certfn(\mb{x})$, we have:
\begin{align*}
    &\min_{(\mb{p}', q') \in \mathcal{V}_{\certfn}(\mb{x}, \gamma')} \left\Vert \begin{bmatrix} \mb{p} \\ q - \gamma \mb{p}^\top\mb{s} \end{bmatrix} - \begin{bmatrix} \mb{p}' \\ q' \end{bmatrix} \right\Vert_2\\
    &~\leq \left\Vert \begin{bmatrix} \mb{p} \\ q - \gamma \mb{p}^\top\mb{s} \end{bmatrix} - \begin{bmatrix} \mb{p} \\ q - \gamma' \mb{p}^\top \mb{s} \end{bmatrix} \right\Vert_2 \leq \mb{p}^\top\mb{s} \vert \gamma - \gamma' \vert,
\end{align*}
and similarly, for $(\mb{p}', q') \in \overline{\mathcal{U}}_\certfn(\mb{x})$, we have:
\begin{equation*}
    \min_{(\mb{p}, q) \in \mathcal{V}_{\certfn}(\mb{x}, \gamma)} \left\Vert \begin{bmatrix} \mb{p} \\ q \end{bmatrix} - \begin{bmatrix} \mb{p}' \\ q' - \gamma' (\mb{p}')^\top \mb{s} \end{bmatrix} \right\Vert_2 \leq (\mb{p}')^\top \mb{s} | \gamma - \gamma' |.
\end{equation*}
Therefore, the Hausdorff distance between $\mathcal{V}_\certfn(\mb{x}, \gamma)$ and $\mathcal{V}_\certfn(\mb{x}, \gamma')$ is bounded by:
\begin{equation*}
    d_H(  \mathcal{V}_\certfn(\mb{x}, \gamma),   \mathcal{V}_\certfn(\mb{x}, \gamma')) \leq \left( \max_{(\mb{p}, q) \in \overline{\mathcal{U}}_\certfn(\mb{x})} \mb{p}^\top\mb{s} \right) \vert \gamma - \gamma' \vert,
\end{equation*}
implying $\mathcal{V}_\certfn(\mb{x}, \gamma)$ is Lipschitz continuous (with respect to the Hausdorff metric) as a function of $\gamma$. The support function of a nonempty, compact, and convex set $\mathcal{A}\subset\R^{m}\times\R$ given by $h_{\mathcal{A}}:\R^{m+1}\to\R$ is defined as:
\begin{equation*}
   h_{\mathcal{A}}(\mb{v}) =  \max_{(\mb{p},q)\in\mathcal{A}} \left\langle \mb{v}, \begin{bmatrix}\mb{p} \\ q \end{bmatrix} \right\rangle
\end{equation*}
Recalling $\mb{n} = \begin{bmatrix} \mb{0}_m^\top & 1 \end{bmatrix}^\top$, the Hausdorff distance between two nonempty, compact, and convex sets $\mathcal{A},\mathcal{B}\subset\R^{m}\times\R$ satisfies:
\begin{equation*}
    d_H(\mathcal{A},\mathcal{B}) = \max_{\mb{v}\in S^m}\vert h_\mathcal{A}(\mb{v}) - h_\mathcal{B}(\mb{v}) \vert \geq \vert h_\mathcal{A}(\mb{n})-h_\mathcal{B}(\mb{n}) \vert.
\end{equation*}
Noting that $\psi$ can be expressed in terms of the support function as a composition, with:
\begin{equation*}
    \psi(\gamma) =  \max_{(\mb{p},q)\in\mathcal{V}_\certfn(\mb{x}, \gamma)} q = h_{\mathcal{V}_\certfn(\mb{x}, \gamma)}(\mb{n}),
\end{equation*}
implies $\psi$ is a continuous function.
\end{proof}

\section{Simulation}
\label{sec:sim}

To demonstrate the capabilities of the proposed controller, we run simulated experiments in the setting of an 
inverted pendulum, described by the system model:
\begin{equation}
\label{eqn:pendmodel}
    \frac{\mathrm{d}}{\mathrm{d}t}\begin{bmatrix}\theta \\ \dot{\theta} \end{bmatrix} = \begin{bmatrix} \dot{\theta} \\ \frac{\hat{g}}{\hat{\ell}}\sin(\theta) \end{bmatrix} + \begin{bmatrix} 0 \\ \frac{1}{\hat{m}\hat{\ell}^2} \end{bmatrix}u 
\end{equation}
with state $\mathbf x = [\theta, \dot\theta]^\top\in\R^2$, gravitational acceleration estimate $\hat{g} =10$, length estimate $\hat{\ell} = 0.63$, and mass estimate $\hat{m} = 0.63$. We assume that the unknown true system has modified inverted pendulum dynamics given by:
\begin{equation}
\label{eqn:pendtrue}
    \frac{\mathrm{d}}{\mathrm{d}t}\begin{bmatrix}\theta \\ \dot{\theta} \end{bmatrix} = \begin{bmatrix} \dot{\theta} \\ \frac{g}{\ell}\sin(\theta) \end{bmatrix} + \begin{bmatrix} 0 \\ \frac{1-0.75\exp(-\theta^2)}{m\ell^2} \end{bmatrix}u 
\end{equation}
with gravitational acceleration $g = 10$, length $\ell = 0.7$, and mass $m = 0.7$. We note the inclusion of a state-dependent input gain given by
$1 - 0.75 \exp(-\theta^2)$, which attenuates the input most significantly when the pendulum is vertical.

Consider the functions $V:\R^2\to\R_{+}$ and $h:\R^2\to\R$, given by $V(\mb{x}) = \mb{x}^\top\mb{Px}$ and $h(\mb{x}) =  \mb{x}^\top\mb{P}\mb{x} - c$ with
\begin{equation}
    \mb{P} = \begin{bmatrix} \sqrt{3} & 1 \\ 1 & \sqrt{3} \end{bmatrix}
\end{equation}
and a constant $c = 0.2$. Noting that both the system model \eqref{eqn:pendmodel} and the true system \eqref{eqn:pendtrue} are feedback linearizable, $V$ and $h$ satisfy the CCF condition \eqref{eqn:ccf} for the comparison function $\alpha(r)=\lambda_{\min}(\mb{Q}) r / \lambda_{\max}(\mb{P})$, for both the model and the true system (implying Assumption \ref{ass:wellposed} is met). In particular, $V$ may serve as a CLF, and $h$ as a CBF.

We explore data-driven stability and safety with this system for different methods of gathering training data.
We compare the robust data-driven controller with an oracle controller, which knows the true model, and a nominal controller, which treats the estimated model as if it were true, both specified via the \eqref{eqn:CCF-QP}. In each setting, the system model underestimates the pendulum mass and length and assumes that the input gain is independent of the state. As a result, the Lipschitz constants of the errors can be bounded by $\Lips_{\errf}=g| \ell - \hat \ell|/( \ell \hat \ell)$ and $\Lips_{\errg}=0.75\sqrt{2}\exp(-\frac{1}{2})/(m\ell^2)$. 

For the stability experiment, we generate data sets by gridding the state and input spaces. We consider $\theta$ in the interval $[0, 1]$ and $\dot{\theta}$ in the interval $[-0.25, 0.25]$, with grid sizes $\epsilon_\theta = \epsilon_{\dot{\theta}} = \frac{1}{40}$. We generate a \textit{sparse} data set by considering $u \in \{ -5, -1 \}$ and a \textit{dense} data set by considering $u \in \{ -5, -3, -1, 1, 3, 5 \}$.
We set the regularizing controller $\mb{k}_d$ to be a feedback linearizing controller designed using the estimated dynamics and linear gains $k_p = 1/2$ and $k_d = \sqrt{3}/2$.
The system is simulated from the initial condition $\mb{x} = [0.8, 0.1]$ for $10$ seconds, with control inputs specified at $100$ Hz. 
For the safety experiment, we consider a similar pair of sparse and dense data sets with $\theta$ in the interval $[0, 0.25]$ and $\dot{\theta}$ in the same interval as the stability experiment, the same grid sizes, and the same sets of control inputs. The system is simulated with no regularizing controller $\mb{k}_d\equiv0$ from the initial condition $\mb{x} = [0.1, 0.1]$ for the same amount of time and the same control input frequency. 

The results of the simulations may be seen in Figure \ref{fig:sim_results}. In both experiments, we see that the nominal controller fails to achieve the specified objective, while the oracle controller succeeds. Furthermore, we see that the data-driven controllers perform nearly identically for both the sparse and dense input data sets. This similarity indicates that greater variety in input directions and coverage of the input space by the data set are not needed to achieve satisfactory closed-loop behavior.
\section{Conclusion}
We propose a novel approach for robust data-driven control synthesis under model uncertainty
and show that good performance can be achieved when training data provides sufficiently dense coverage of the state space with targeted excitation in input directions. This approach is well positioned to make progress on both theoretical and application-oriented problems at the intersection nonlinear control and non-parametric machine learning. Future work includes investigating strategies for data segmentation to enable efficient computation for closed-loop control and understanding continuity and recursive feasibility of the controller.


\bibliographystyle{IEEEtran} 
\bibliography{main}

\end{document}